\theoremstyle{plain}%
\newtheorem{theorem}{Theorem}
\newtheorem{lemma}[theorem]{Lemma}
\newtheorem{question}{Question}
\theoremstyle{remark}%
\newtheorem{example}[theorem]{Example}%
\theoremstyle{definition}%
\newcommand{\maxcut}[0]{\textrm{Maximum Cut}}
\newcommand{\block}[0]{\mathcal{B}}
\newcommand{\intmodel}[1]{\mathcal{#1}}
\newcommand{\cut}[0]{\textrm{Cut}}
\newcommand{\zone}[1]{\textrm{Zone}(#1)}
\newcommand{\buffer}[2]{\textrm{Buffer}(#1,#2)}
\newcommand{\col}[0]{\chi}
\newcommand{\len}[1]{\textrm{len}({#1})}
\begin{document}

\title[Maximum Cut on Interval Graphs of Interval Count Two is NP-complete]{Maximum Cut on Interval Graphs of Interval Count Two is NP-complete}


\author*[1]{\fnm{Alexey} \sur{Barsukov}}\email{alexey.barsukov@matfyz.cuni.cz}

\author*[2]{\fnm{Bodhayan} \sur{Roy}}\email{bodhayan.roy@gmail.com}

\affil[1]{\orgdiv{Department of Algebra}, \orgname{Charles University, Czechia}}

\affil[2]{\orgdiv{Department of Mathematics}, \orgname{Indian Institute of Technology Kharagpur, India}}


\abstract{An interval graph has interval count $\ell$ if it has an interval model, where among every $\ell+1$ intervals there are two that have the same length. Maximum Cut on interval graphs has been found to be NP-complete recently by Adhikary et al.~\cite{interval} while deciding its complexity on unit interval graphs (graphs with interval count one) remains a longstanding open problem. More recently, de Figueiredo et al.~\cite{interval4} have made an advancement by showing that the problem remains NP-complete on interval graphs of interval count four. In this paper, we show that Maximum Cut is NP-complete even on interval graphs of interval count two.}

\keywords{maximum cut, interval graph, computational complexity}

\pacs[MSC Classification]{35A01, 65L10, 65L12, 65L20, 65L70}

\maketitle

\section{Introduction}

For a given input graph $G = \bigl(V(G), E(G)\bigr)$, the Maximum Cut problem finds a partition of $V(G)$ in two sets such that the number of edges between them is maximal.

$\maxcut$ is a fundamental and well-known NP-complete problem~\cite{garey_johnson1990}. 
The weighted version of the problem is one of Karp’s original 21 NP-complete problems~\cite{karp1972}. 
There are many results describing the complexity of $\maxcut$, where the input of the problem is restricted on some particular classes of graphs.
The problem remains NP-complete for many graph classes, such like cubic graphs~\cite{cubic}, split graphs~\cite{split_cobipartite_btw}, co-bipartite graphs~\cite{split_cobipartite_btw}, unit disk graphs~\cite{unit_disk}, total graphs~\cite{total_line}, and interval graphs~\cite{interval}. 
On the positive side, polynomial time algorithms are known for planar graphs~\cite{planar}, line graphs~\cite{total_line}, graphs not contractible to $K_5$~\cite{not_contractible_to_k5} and graphs with bounded treewidth~\cite{split_cobipartite_btw}.

There are two papers that have mainly motivated our research. 
First, a recent proof of NP-completeness of $\maxcut$ on interval graphs provided by Adhikary et al. in~\cite{interval}. 
Second, a more recent result of de Figueiredo et al. in~\cite{interval4}, where they extend the result of the first paper by proving that $\maxcut$ is NP-complete on graphs of interval count four. 
Using the technique of the above work, de Figueiredo et al. prove the NP-completeness of $\maxcut$ on permutation graphs as well, which too was open for a long time~\cite{permutation}.
The bounding of the number of interval lengths brings us closer to the final goal: to characterize $\maxcut$ for unit interval graphs as they are exactly interval graphs of interval count one. 
There were attempts to provide a polynomial-time algorithm for unit interval graphs~\cite{unit_interval1,unit_interval2}, but they both were later shown to be incorrect~\cite{unit_interval1correction,unit_interval2correction}. 
In this paper, we extend the result of de Figueiredo et al.~\cite{interval4} by proving Theorem~\ref{th:main_result} which brings us as close as possible to $\maxcut$ on unit interval graphs.
\begin{theorem}\label{th:main_result}
$\maxcut$ on interval graphs of interval count two is NP-complete.
\end{theorem}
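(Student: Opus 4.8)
The plan is to establish membership in NP and then NP-hardness through a polynomial-time reduction. Membership is immediate: a cut, viewed as a bipartition of the vertex set, is a polynomial-size certificate, and counting its crossing edges to compare against the target takes linear time, so the decision version of $\maxcut$ is in NP. The substance of the theorem is the hardness reduction, and the novelty is that the interval graph it produces must be realizable with only \emph{two} distinct interval lengths.

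For hardness, I would reduce from a suitably structured NP-hard problem, following the overall strategy of \cite{interval,interval4} but redesigning the gadgets so that the output uses only two lengths. Concretely, I would fix a short length (normalized to $1$) and a long length $L$ taken polynomially large, and lay the intervals along the real line partitioned into consecutive \emph{zones}. Each variable or vertex of the source instance is encoded by a \emph{block}: a clique of many short intervals packed into a single zone, arranged so that in any cut the block's internal contribution is maximized exactly when the clique is split as evenly as possible. This pins down a canonical, balanced behaviour for each block. The long intervals then play two roles: \emph{buffer} intervals, which sit between consecutive zones to insulate non-adjacent blocks from one another, and \emph{connector} intervals, which reach across zones to realize precisely the adjacencies dictated by the edges or constraints of the source instance.

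The correctness argument would show that the optimum cut value decomposes as a fixed baseline --- the sum of the balanced within-block contributions together with the fixed buffer contributions --- plus a term that is a monotone function of the source objective; hence a cut of size at least $k$ exists in the constructed graph if and only if the source instance meets the corresponding threshold. The forward direction is the easy one: any good solution of the source instance translates into the canonical, block-balanced cut. The reverse direction needs an exchange argument: starting from an arbitrary large cut, I would show it can be transformed, without decreasing its size, into one in which every block is balanced and every buffer sits in its canonical state, after which a solution of the source instance can be read off.

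The main obstacle is the gadget design under the two-length restriction. With four lengths available \cite{interval4} there is ample room to tune overlaps, but with only a short and a long length every intersection pattern --- forming each clique, keeping each buffer neutral, and routing each connector to exactly the intended blocks while meeting no others --- must be produced purely by the horizontal placement of two fixed lengths. Guaranteeing that the represented graph is exactly the intended one with no stray intersections, that each block is forced to split evenly at the optimum, and that the buffer and connector contributions dominate any deviating cut, is where the bulk of the technical work lies, and it is the step I expect to be the crux.
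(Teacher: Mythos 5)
Your NP-membership argument and the overall scaffolding --- zones along the line, tightly packed cliques of short intervals whose optimal-cut behaviour is forced, long intervals used both as insulation and as carriers of adjacency information --- do match the general shape of the paper's reduction (which reduces from \maxcut{} on cubic graphs \cite{cubic}). But the proposal stops exactly where the theorem's actual content begins, and what it defers is not a routine technicality that an exchange argument will absorb.

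The gap is the routing problem you flag in your last paragraph. With only two lengths, every long interval spans exactly the same distance, so a single ``connector'' can never reach from one vertex gadget to an arbitrary other one: an edge must instead be realized by a \emph{chain} of long intervals relayed through intermediate gadgets (the paper's link and stretch gadgets), each relay propagating a colour ``signal'' recording which side of the cut the originating vertex lies on, and the edge gadget at the meeting point of two such chains is built so that the cut gains strictly more precisely when the two arriving signals differ (\Cref{lemma:edge_gadget_stable}). Moreover, since the source graph's edges are arbitrary, chains belonging to different edges must cross one another, and two chains of identical-length intervals cannot be made to cross by placement alone: the paper's central device is a dedicated \emph{switch gadget} (\Cref{lemma:switch_gadget,lemma:switch_gadget_stable}) that transposes two adjacent chains while controlling their colours, and the whole reduction is organized around buffer zones in which a prescribed sequence of switches permutes the chains until each reaches its destination. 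This switching mechanism, together with the quantitative stability lemmas showing each gadget's \maxcut{} partition stays ``almost alternating'' even when overlapped by many foreign long intervals (so the signals are reliable), is exactly the crux you name but do not supply; without it the proposal is a plan for a proof rather than a proof of the theorem.
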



\section{Preliminaries}\label{sec:preliminaries}

For $a,b$ in $\mathbb{R}$, an \emph{interval} between $a$ and $b$, denoted $[a,b]$, is the set of all $x\in\mathbb{R}$ such that $a\leq x\leq b$.
The \emph{left} and \emph{right endpoints} of $I = [a,b]$ are denoted by $\ell(I):=a$ and $r(I) := b$.
The \emph{length} $\len{I}$ of an interval $I$ equals $r(I)-\ell(I)$.
A set of intervals $\intmodel{I}$ is called an \emph{interval model} of a graph $G$ if there is a one-to-one mapping $f\colon V(G)\to\intmodel{I}$ such that, for any pair $u,v$ of vertices of $G$, $uv$ is in $E(G)$ if and only if the intervals $f(u)$ and $f(v)$ have a non-empty intersection.
A graph is called \emph{interval} if it has an interval model.
A graph $G$ has \emph{interval count} $\ell$ if there exists a set $C\subset \mathbb{R}$ of size $\ell$ and an interval model $\intmodel{I}$ of $G$ such that, for all $I$ in $\intmodel{I}$, we have $\len{I}\in C$.

A \emph{cut} $(R,B)$ of a graph $G$ is a partition of its vertices in two classes: $V(G) = R\sqcup B$.
For a cut $(R,B)$, a \emph{cut edge} is an edge that is incident to a vertex in $A$ and to a vertex in $B$.
The \emph{cut value} is the number of cut edges of the graph.
For a subset of vertices $S$ of a graph, the cut value \emph{contributed by} $S$ is the number of cut edges that have at least one of their ends in $S$.
Recall that the $\maxcut$ problem asks for a cut of the maximal value.
Notice that, for interval graphs, $\maxcut$ is equivalent to the problem of finding a coloring of an interval model in two colors, say $R$ (Red) and $B$ (Blue), where the number of differently colored pairs with non-empty intersection is maximal.
If an interval $I$ has a color $c\in\{R, B\}$, then we use the notation $\col(I) = c$. 
If $\block$ is a set of intervals that all have the same color $c\in\{R, B\}$, then we use the notation $\col(\block) = c$.

We use intervals of two different lengths: \emph{short} and \emph{long}. 
We assume that, if an interval $I$ is short, then $\len{I}=1$, otherwise, $\len{I}=\alpha$ for some $\alpha>1$ that will be made precise later.

A \emph{block} $\block$ is a set of short intervals such that, for any two intervals $I,I'\in\block$, we have $\ell(I)=\ell(I')$ and $r(I)=r(I')$.
For a block $\block$, denote $\ell(\block):=\ell(I)$ and $r(\block):=r(I)$ for an interval $I\in\block$.
The \emph{size} $|\block|$ of a block $\block$ is the number of intervals in it.
For example, a block of size $n$ is an interval model of the complete graph $K_n$ on $n$ vertices. 

A union of blocks $\block_1\cup\block_2\cup\block_3$ is a \emph{3-block} if
\begin{itemize}
    \item $r(\block_1)<\ell(\block_3)$ and $r(\block_1) \geq \ell(\block_2)$ and $r(\block_2) \geq \ell(\block_3)$ and
    \item $|\block_1|=|\block_3|$ and $|\block_2|=2|\block_1|$.
\end{itemize}
\begin{figure}[ht]
\centering
\includegraphics[width=0.5\textwidth]{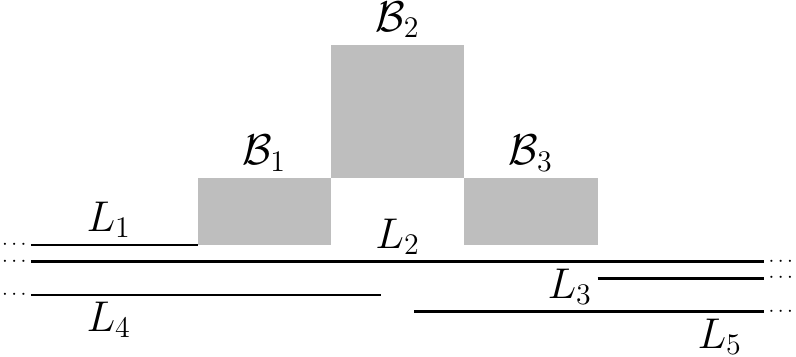}
\caption{A 3-block and five long intervals.}
\label{fig:3blocks0}
\end{figure}
In this article, 3-blocks are displayed as three rectangles, see \Cref{fig:3blocks0}. 
An interval $I$ \emph{terminates at} a block $\block$ if $\ell(\block)\leq r(I)\leq r(\block)$ and \emph{starts from} $\block$ if $\ell(\block)\leq \ell(I)\leq r(\block)$.
For a long interval $L$, we say that it \emph{covers} $\block$ if $\ell(L)<\ell(\block)<r(\block)<r(L)$.
For example, on \Cref{fig:3blocks0}, $L_1$ terminates at $\block_1$, $L_2$ covers all the blocks, $L_3$ starts from $\block_3$, $L_4$ covers $\block_1$ and terminates at $\block_2$, and $L_5$ starts from $\block_2$ and covers $\block_3$.

Consider any family of blocks of intervals of the same length $\block_1,\ldots,\block_s$ such that, for $i$ in $[s-1]$, we have $r(\block_i)=\ell(\block_{i+1})$.
A \emph{coloring} is a mapping $\chi\colon \block_1\cup\dots\cup\block_s \to \{R,B\}$.
A coloring $\chi$ is \emph{alternating} if, for $i\in[s-1]$, we have $\col(\block_i) \not= \col(\block_{i+1})$. 
A coloring $\chi$ is \emph{almost alternating except for $(\block_i,\delta)$} if it becomes alternating after removal of at most $\delta$ intervals from the block $\block_i$.

\section{Background}\label{section:background}

We first revisit the reduction of Adhikary et al. in~\cite{interval}. 
They reduced $\maxcut$ on cubic graphs to $\maxcut$ on interval graphs.
Recall that a graph $G$ is \emph{cubic} if the degree of every vertex $v\in V(G)$ is equal to 3.
In their paper, each vertex and edge of the original cubic graph was represented by a set of intervals, called \emph{vertex} and \emph{edge gadgets} respectively. 
The interval model consisted of first all the vertex gadgets, and then all the edge gadgets arranged from left to right. 
If an edge was incident to a vertex, then the corresponding vertex and edge gadgets were ``linked'' by a pair of very long intervals starting from the vertex gadgets and terminating at the edge gadget. 
They were called \emph{link intervals}. 
The number of intervals in any gadget was much greater than the total number of link intervals in the graph. 
It was shown that, in any $\maxcut$ partition of this interval graph, each vertex gadget or edge gadget could have only two possible partitions. 
For a vertex gadget, these two partitions were made to correspond to its membership in one of the partition sets for a maximum cut of the original cubic graph. 
If two adjacent vertices of the cubic graph belonged to different sets, then the corresponding edge gadget would make more cut edges with link intervals than if these vertex gadgets were in the same set.
Thus, a cut of maximum size of the given cubic graph always corresponded to a cut of maximum size of the constructed interval graph and vice versa.
As $\maxcut$ on cubic graphs is NP-complete, this reduction implies that it is also NP-complete on interval graphs.

In the reduction above, the gadgets contained intervals of two different lengths. 
However, the length of each link interval depended on the relative positions of its vertex and edge gadgets. 
So the total number of different lengths of link intervals was linearly dependent on the size of the cubic graph. 
So, this interval graph seemed to be far away from unit interval graphs, for which the problem was still open. 
De Figueiredo et al. in~\cite{interval4} made a very important advancement in this regard. 
They showed that $\maxcut$ was NP-complete even when the total number of lengths used for the intervals was only 4, i.e., when the interval count of the graph was 4.
In their paper, the vertex and edge gadgets were basically the same as that of Adhikary et al. in~\cite{interval}.
However, an extra gadget, resembling the vertex and edge gadgets, was used as a ``join gadget'' between link intervals. 
Instead of having a link interval running through the entire length between its corresponding vertex and edge gadgets, they used a chain of link intervals joined to each other with the use of join gadgets.
\begin{figure}[ht]
\centering
\includegraphics[width=0.99\textwidth]{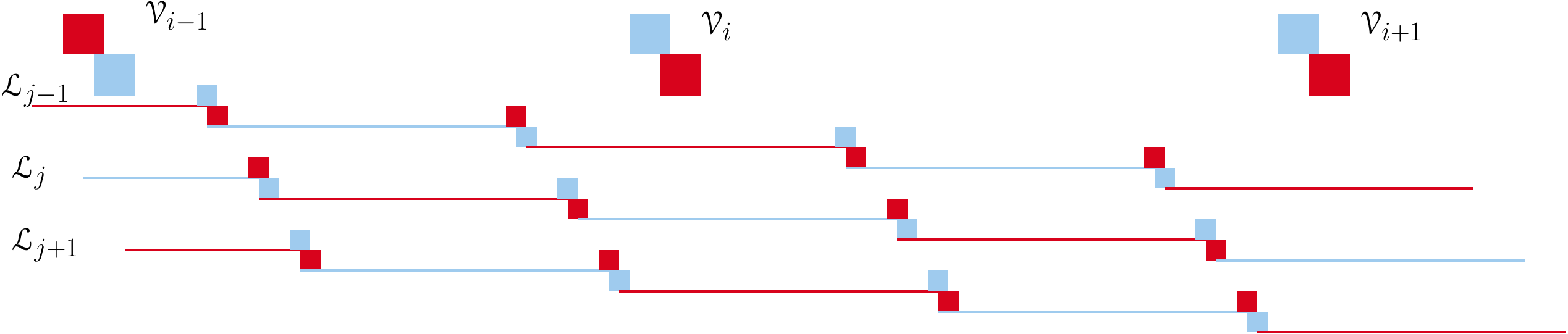}
\caption{Link chains between consecutive vertex gadgets. Squares represent families of intervals with the same endpoints.}
\label{fig:linkchains}
\end{figure}

A \emph{link chain} is a sequence of link intervals with a join gadget between every two consecutive link intervals, where one link interval terminates at the gadget and the other one starts from it. The link intervals of all other link chains that intersect this join gadget, cover it. The join gadgets ensure that, in a maximum cut partition, the link intervals of every link chain are colored with Red and Blue alternately. Thus, link intervals of arbitrary lengths in the original reduction can be replaced by link intervals of a single size. However, one problem remained. Between consecutive vertex or edge gadgets, the link chains had their join gadgets in a fixed order. For example, consider vertex gadgets $\intmodel{V}_{i-1}$, $\intmodel{V}_i$ and $\intmodel{V}_{i+1}$, and link chains $\intmodel{L}_{j-1}$, $\intmodel{L}_j$ and $\intmodel{L}_{j+1}$ (\cref{fig:linkchains}). If in the gap between $\intmodel{V}_{i-1}$ and $\intmodel{V}_i$, the join gadget of $\intmodel{L}_{i-1}$ occurred first, followed by a join gadget each of $\intmodel{L}_j$ and $\intmodel{L}_{j+1}$ respectively, then they would occur in the same order in the gap between $\intmodel{V}_i$ and $\intmodel{V}_{i+1}$. This would pose a problem if the structure of the graph required $\intmodel{L}_{j-1}$ and $\intmodel{L}_{j+1}$ to have a partial intersection with the same edge gadget. In such a case, a second type of link interval with a different length would be used to end link chain $\intmodel{L}_{j+1}$ early and enable it to partially intersect the same edge gadget along with $\intmodel{L}_{j-1}$. Thus, their reduction needed intervals of two lengths for the vertex, edge and join gadgets, and another two lengths as the link intervals, totaling to an interval count of four.


\section{Overview of the reduction}\label{section:reduction_overview}
We reduce the interval count down to 2 due to our modifications of gadgets and link chains. 
Instead of using separate lengths of short and long intervals in previous works, the gadgets are now mainly constructed from 3-blocks. 
The link chains do not use long intervals of two separate lengths anymore.
The issue of non-consecutive link chains partially intersecting the same edge gadget is resolved by using a \emph{switch gadget} that switches the relative positions of join gadgets of link chains. 

The $\maxcut$ problem is NP-complete on cubic graphs~\cite{cubic}.
In this article, we provide a polynomial time reduction from $\maxcut$ on cubic graphs to $\maxcut$ on interval graphs of interval count two.
For every cubic graph $G$ (of size $n$), we construct a graph $H$ of interval count 2 such that any $\maxcut$ partition of $H$ corresponds to some $\maxcut$ partition of $G$ and vice versa. 
Its top level composition is displayed on \Cref{fig:reduction}.
On the left, there are vertex gadgets $\intmodel{V}_1,\dots,\intmodel{V}_n$. 
For each vertex gadget there are three link chains constructed by alternating long intervals and join gadgets.
For every edge $g_ig_j$ of $E(G)$ there is an edge gadget $\intmodel{E}_{ij}$ such that two link chains $\intmodel{L}^i_{ij},\intmodel{L}_{ij}^j$ starting from $\intmodel{V}_i$ and $\intmodel{V}_j$ respectively both terminate at $\intmodel{E}_{ij}$.

Before $\intmodel{L}^i_{ij},\intmodel{L}_{ij}^j$ reach the edge gadget, we must ensure that there is no other link chain between them. 
The reason is that, in our construction, neither $\intmodel{E}_{ij}$ can intersect another gadget nor a long interval not from $\intmodel{L}^i_{ij}\cup\intmodel{L}_{ij}^j$ can start from or terminate at $\intmodel{E}_{ij}$.
Suppose, for example, that $v_1v_3\in E(G)$. 
If we straightforwardly connect link chains $\intmodel{L}^1_{13},\intmodel{L}_{13}^3$ to $\intmodel{E}_{13}$, then, as $\intmodel{V}_2$ is positioned ``between'' $\intmodel{V}_1$ and $\intmodel{V}_3$, this edge gadget will intersect join gadgets of link chains that start from $\intmodel{V}_2$.
To solve this issue, we repetitively apply the \emph{switching procedure} that switches two consecutive link chains and, therefore, reduces the number of join gadgets between the ones of $\intmodel{L}^i_{ij},\intmodel{L}_{ij}^j$. 
After these two chains terminate at $\intmodel{E}_{ij}$, we proceed similarly for every other edge of $E(G)$.  

\begin{figure}[ht]
    \centering
    \includegraphics[width=\textwidth]{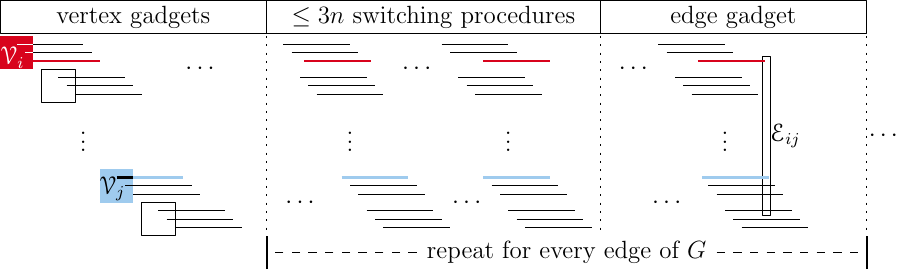}
    \caption{The composition of $H$.}
    \label{fig:reduction}
\end{figure}

\begin{example}
    Let $G$ be the cubic graph on 6 vertices displayed on Figure~\ref{fig:cubic graph}.
    
    \begin{figure}
        \centering
        \includegraphics[width=0.5\textwidth]{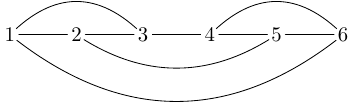}
        \caption{A cubic graph $G$.}
        \label{fig:cubic graph}
    \end{figure}

    The corresponding interval graph $H$ of interval count 2 is displayed on Figure~\ref{fig:construction_example}.
    On the left, it has 6 ``vertex gadgets'' $1,\dots,6$ that correspond to the vertices of $G$.
    For each vertex gadget, there are precisely three chains of long intervals that start from it, each chain represents an edge incident to the corresponding vertex.

    At first, we add the ``edge gadgets'' $E(12), E(23), E(34), E(45), E(56)$.
    The remaining edges are: $(13), (25), (46), (16)$.
    Notice that, for any of these edges, it is impossible to connect the corresponding long interval chains with an edge gadget without intersection of two gadgets.
    Therefore, we add two ``switch gadgets'' SWITCH(12) and SWITCH(45), and plug into them the corresponding long interval chains in order to switch their relative positions.
    After that, we add the edge gadgets $E(13)$ and $E(46)$.
    Finally, we deal with the edge $(25)$ and then with $(16)$.

    \begin{figure}
        \centering
        \includegraphics[scale=0.5,angle=90]{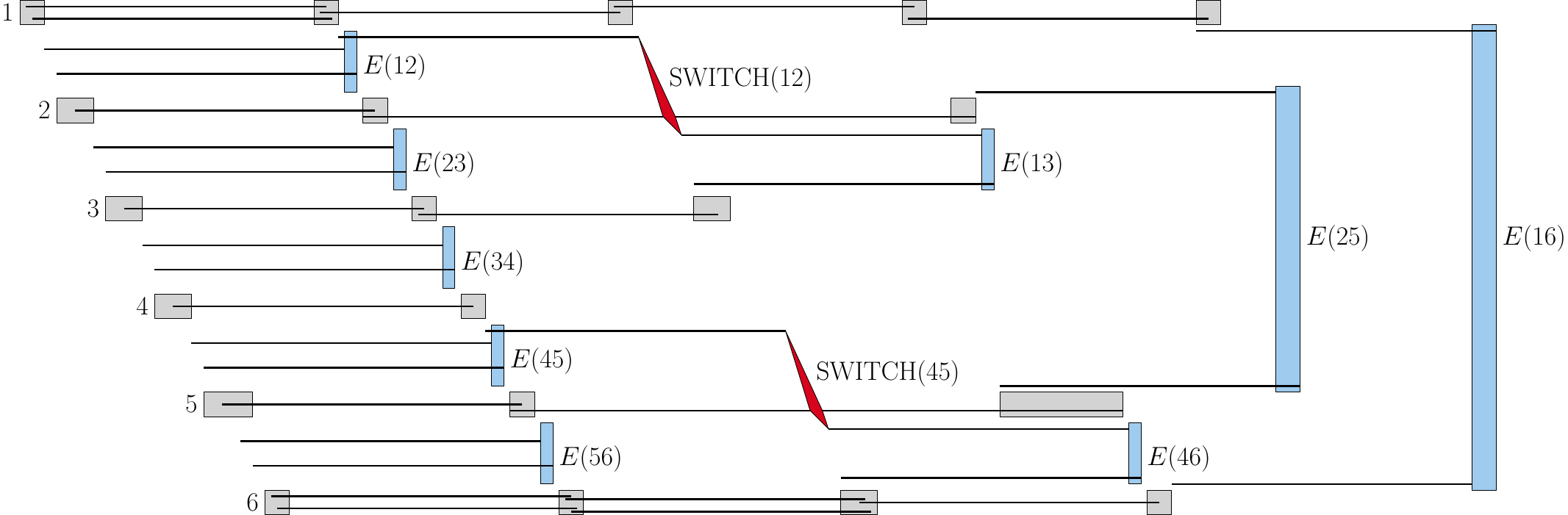}
        \caption{The graph $H$ of interval count two whose $\maxcut$ partitions correspond to $\maxcut$ partitions of the cubic graph $G$ from Figure~\ref{fig:cubic graph}.}
        \label{fig:construction_example}
    \end{figure}
\end{example}

\section{Gadgets}\label{section:gadgets}

Before we formally describe the reduction graph $H$ of interval count two, we present all the gadgets that are used in its construction.
Assuming that the size $|V(G)|$ of the cubic graph $G$ is $n$, we introduce two parameters that denote the gadget sizes.
Let $x$ be in $\Omega(n^7)$ and let $k$ be in $\Omega(n^6)$.
Next to the definition of each gadget, we add a figure, where this gadget is displayed.
On each of the figures, this gadget is colored in Red and Blue.
After we construct the whole graph $H$ of interval count two, we will argue that, for every $\maxcut$ partition of $H$, the coloring of each of its gadgets is similar to one displayed on the corresponding figure.

\paragraph*{Vertex gadget}
The  vertex gadget consists of three 3-blocks and two short intervals: $\intmodel{V}:=\block^1\sqcup\block^2\sqcup\block^3\sqcup \{S_{12},S_{23}\}$. 
The blocks $\block^i_1,\block^i_2,\block^i_3$ of each 3-block $\block^i$ have sizes $x,2x,x$ respectively. 
The 3-blocks are connected by short intervals $S_{12},S_{23}$, they are called \emph{short link} intervals.
That is, we have $\ell(S_{12})=r(\block^1_3)$ and $r(S_{12})=\ell(\block^2_1)$ and $\ell(S_{23}) = r(\block^2_3)$ and $r(S_{23}) = \ell(\block^3_1)$.
For every vertex gadget, there are three long intervals $L_1, L_2, L_3$ that start from $\block_3^1,\block_3^2,\block_3^3$ and form the corresponding link chains.
A vertex gadget and the three long intervals are displayed on \Cref{fig:vertex_gadget}.

\begin{figure}[ht]
    \centering
    \includegraphics[width=0.9\textwidth]{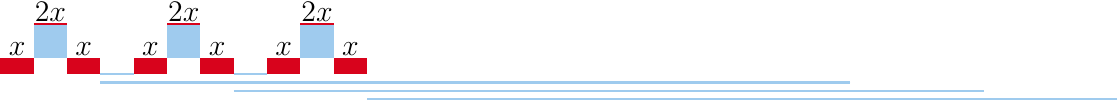}
    \caption{A vertex gadget.}
    \label{fig:vertex_gadget}
\end{figure}

\paragraph*{Edge gadget}
An edge gadget consists of two 3-blocks connected by a short link interval.
That is, $\intmodel{E}:=\block^1\sqcup\block^2\sqcup\{S_{12}\}$ such that $r(\block^1_3)=\ell(S_{12})$ and $r(S_{12}) = \ell(\block^2_1)$.
The blocks $\block^i_1,\block^i_2,\block^i_3$ of each 3-block $\block^i$ have sizes $k,2k,k$ respectively. 
There are two long intervals $L_1, L_2$ of the corresponding link chains terminating at $\intmodel{E}$ such that $r(L_1) = \ell(\block^1_1)$ and $r(L_2) = \ell(\block^2_2)$.
An edge gadget together with $L_1$ and $L_2$ is displayed on \Cref{fig:edge_gadget}.
Under a $\maxcut$ partition of $H$, for edge gadgets, the colors of $L_1$ and $L_2$ may be arbitrary.
So, \Cref{fig:edge_gadget} displays all the cases modulo switching the colors.

\begin{figure}[ht]
\centering
\includegraphics[width=0.65\textwidth]{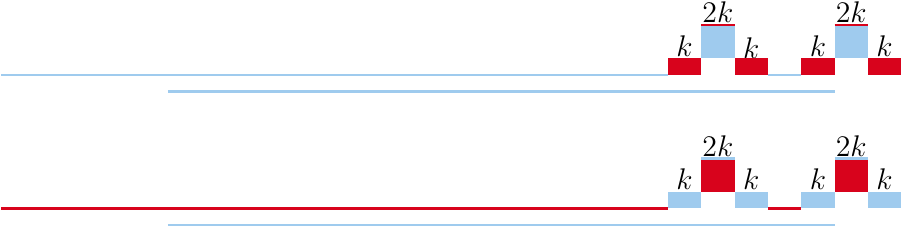}
\caption{The two cases of the $\maxcut$ coloring of an edge gadget: $\col(L_1)=\col(L_2)$ and $\col(L_1)\not=\col(L_2)$.}
\label{fig:edge_gadget}
\end{figure}

\paragraph*{Join gadget}

Similarly to vertex gadgets, a join gadget consists of 3-blocks of size $x,2x,x$ connected by short link intervals. 
It can have at most ten 3-blocks, see \Cref{fig:stretch_compress}. 
Every join gadget has at most 3 long intervals terminating at it and the same number of long intervals starting from it. 
The main purpose of a join gadget is to connect a vertex gadget to edge gadgets by link chains of long intervals. The distance between blocks may be modified in order to readjust the relative distances between long intervals terminating at the gadget, see \Cref{fig:stretch_compress}.

\begin{figure}[ht]
\centering
\includegraphics[width=\textwidth]{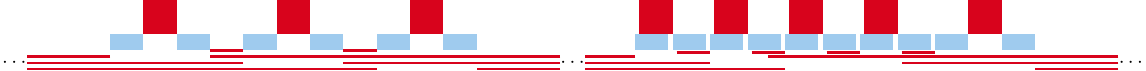}
\caption{A common join gadget and a join gadget readjusting the distances between long intervals.}
\label{fig:stretch_compress}
\end{figure}

\paragraph*{Switch gadget}
A switch gadget is the only gadget that is not constructed exclusively from 3-blocks. 
It has 2 parts. 
The \emph{first part} consists of 9 bottom blocks $\block^{\text{bot}}_1,\ldots,\block^{\text{bot}}_9$ and 4 top blocks $\block_1^{\text{top}},\ldots,\block_4^{\text{top}}$. 
At the bottom, $|\block_1^{\text{bot}}|=|\block_9^{\text{bot}}|=x$, and, for $2\leq i\leq 8$, we have $|\block_i^{\text{bot}}|=2x$. 
At the top, for $i\in\{1,\ldots,4\}$, we have $|\block_i^{\text{top}}|=2x'$, where $x'$ is such that $x/2<x'<x$. 
There are two long intervals $L_1$ and $L_2$ such that $r(L_1) = \ell(\block_1^{\text{bot}})$ and $r(L_2) = \ell(\block_1^{\text{top}})$.
There are two long intervals $R_1$ and $R_2$ such that $\ell(R_1) = r(\block_4^{\text{top}})$ and $\ell(R_2) = r(\block_9^{\text{bot}})$.
The long intervals $L_1, R_2$ are in the same link chain $\intmodel{L}$, the long intervals $L_2,R_1$ are in the same link chain $\intmodel{L}'$ distinct from $\intmodel{L}$.
The main property of the first part is that, for any $\maxcut$ partition, $\col(L_1) = \col(R_2)$ and $\col(L_2)\not=\col(R_1)$.
The \emph{second part} of the switch gadget is a 3-block $\block^2 = (\block^2_1,\block^2_2,\block^2_3)$ such that $R_1$ terminates at $\block^2_1$, $R_2$ covers the 3-block $\block^2$, and a new long interval $R_3$ starts from $\block^2_2$.
Both parts are displayed on \Cref{fig:switch_gadget}. 
The idea about this gadget is that it ``switches'' the link chains $\intmodel{L}$ containing $L_1R_2$ and $\intmodel{L}'$ containing $L_2R_1R_3$, and preserves their colors at the same time: $L_1$ is to the left from $L_2$ but $R_2$ is to the right from $R_1$, $\col(L_2)=\col(R_3)$ and $\col(L_1)=\col(R_2)$.

\begin{figure}[ht]
    \centering
    \includegraphics[scale=0.6]{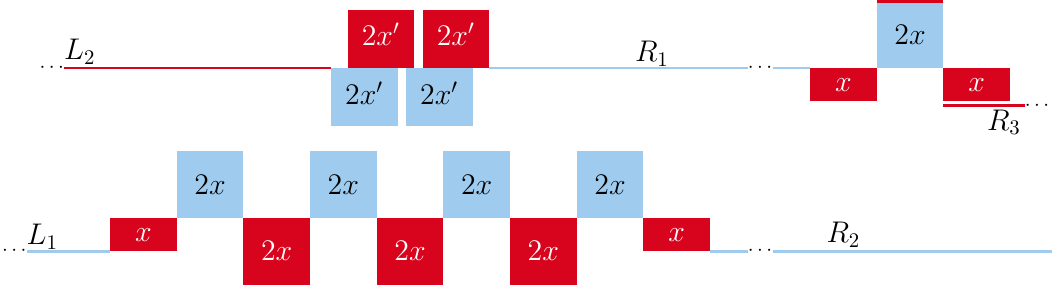}
    \caption{The first and the second parts of a switch gadget.}
    \label{fig:switch_gadget}
\end{figure}

The following lemma ensures that the coloring of the first part on \Cref{fig:switch_gadget} is $\maxcut$.

\begin{lemma}\label{lemma:switch_gadget}
Consider a graph with an interval model as on \Cref{fig:switch_gadget}. Then, in any $\maxcut$ partition of this graph, 
\begin{inparaenum}[(i)]
    \item for each of the two levels, block colorings are alternating, and 
    \item $\col(R_1)\not=\col(L_2)$, and $\col(R_2)=\col(L_1)$.
\end{inparaenum}
\end{lemma}

\begin{proof}
At first, let us compute the size of the cut for any partition, where the colors of the blocks on the top and on the bottom alternate. 
We do not consider the four long intervals at the moment. 
The value is the same for any of the four such partitions:
\begin{equation*}
    \cut_{\text{alter}} = 3\cdot 4x'^2 + 2\cdot 2x^2 + 6\cdot 4x^2 + 4\cdot 4x'x = 12x'^2 + 28x^2 + 16x'x.
\end{equation*}

Suppose now that $\cut_{\text{alter}}$ is not maximal. 
Then we will introduce the following parameters for each block.
\begin{compactitem}
\item Denote by $y_1,y_3$ the numbers of Blue intervals in $\block^{\text{top}}_1$ and $\block^{\text{top}}_3$.
\item Similarly, $y_2,y_4$ denote the numbers of Red intervals in $\block^{\text{top}}_2$ and $\block^{\text{top}}_4$.
\item Denote by $z_1,z_3,z_5,z_7,z_9$  the numbers of Blue intervals in $\block_1^{\text{bot}},\block_3^{\text{bot}},\block_5^{\text{bot}},\block_7^{\text{bot}},$ and in $\block_9^{\text{bot}}$.
\item  Similarly, denote by $z_2,z_4,z_6,z_8$ the numbers of Red intervals in $\block_2^{\text{bot}},\block_4^{\text{bot}},\block_6^{\text{bot}},$ and in $\block_8^{\text{bot}}$.
\end{compactitem}
We are going to compute the $\maxcut$ for the general case and to compare it to $\cut_{\text{alter}}$ in order to find out when it can be the maximal. 
We are going to split the computation into several parts.
\begin{compactitem}
    \item The $\cut_{\text{inside}}$ part counts the cut-edges inside each of the blocks.
    \item The $\cut_y$ and $\cut_z$ parts count the cut-edges between different blocks that are both on the same level. There are two levels: the top $y$ and the bottom $z$.
    \item The $\cut_{\text{inter}}$ part counts the cut-edges between the two levels.
\end{compactitem}

Now we compute the parts one by one.
\begin{multline*}
    \cut_{\text{inside}} = \sum_{i=1}^4 y_i(2x'-y_i) +z_1(x-z_1)+\sum_{j=2}^8 z_j(2x-z_j) +z_9(x-z_9)  \\
    = -\sum_{i=1}^4 y_i^2-\sum_{j=1}^9z_j^2+2x'\sum_{i=1}^4y_i+x(z_1+2\sum_{j=2}^8z_j+z_9).
\end{multline*}
\begin{multline*}
    \cut_y = \sum_{i=1}^3\bigl(y_iy_{i+1} + (2x'-y_i)(2x'-y_{i+1})\bigr) \\
    =2\sum_{i=1}^3y_iy_{i+1} + 12x'^2 - x'(2y_1+4\sum_{i=2}^3y_i + 2y_4).
\end{multline*}
\begin{multline*}
    \cut_z = z_1z_2+(x-z_1)(2x-z_2)+\sum_{j=2}^7\bigl(z_jz_{j+1}+(2x-z_j)(2x-z_{j+1})\bigr)\\ 
    +z_8z_9+(2x-z_8)(x-z_9)= 2\sum_{j=1}^8z_jz_{j+1}+28x^2-x(2z_1+3z_2+4\sum_{j=3}^7z_j+3z_8+2z_9).
\end{multline*}
Denote by $A$ the set of pairs $\bigl\{(i,j)\mid \block^{\text{top}}_i\text{ intersects }\block^{\text{bot}}_j\bigr\}$. That is,
\begin{equation*}
A = \Bigl\{(1,4),(1,5),(2,4),(2,5),(3,5),(3,6),(4,5),(4,6)\Bigr\}.
\end{equation*}
Then
\begin{multline*}
    \cut_{\text{inter}} = \sum_{(i,j)\in A,i+j\text{ is odd}}\bigl(y_iz_j +(2x'-y_i)(2x-z_j)\bigr) \\
    +\sum_{(i,j)\in A,i+j\text{ is even}}\bigl(y_i(2x-z_j)+(2x'-y_i)z_j\bigr) =    16x'x + 2\sum_{i,j\in A}(-1)^{i+j+1}y_iz_j \\
    =16x'x + 2(y_1-y_2)(z_4-z_5) -2(y_3-y_4)(z_5-z_6).
\end{multline*}
Our goal is to prove that $f=\cut_{\text{inside}}+\cut_y+\cut_z+\cut_{\text{inter}}-\cut_{\text{alter}}$ is less or equal to 0 and the equality is reached only when the colors alternate. 
Think of $f$ as of a polynomial in $x'$ and $x$ of degree 2:
\begin{equation*}
    f = f(x',x) = f_0 + f_1 x' + f_2 x + f_3 x'^2 + f_4 x'x + f_5 x^2.
\end{equation*}
Clearly, $f_3=f_4=f_5=0$. Compute the terms $f_0$, $f_1 x'$, and $f_2 x$:
\begin{multline*}
    f_0 = \underbrace{-\sum_{i=1}^4 y_i^2-\sum_{j=1}^9z_j^2}_{\cut_{\text{inside}}} + \underbrace{2\sum_{i=1}^3y_iy_{i+1}}_{\cut_y} + \underbrace{2\sum_{j=1}^8z_jz_{j+1}}_{\cut_z} \\
    +\underbrace{2(y_1-y_2)(z_4-z_5) -2(y_3-y_4)(z_5-z_6)}_{\cut_{\text{inter}}};
\end{multline*}
\begin{equation*}
    f_1 x' = \underbrace{2x'\sum_{i=1}^4y_i}_{\cut_{\text{inside}}}- \underbrace{x'(2y_1+4\sum_{i=2}^3y_i + 2y_4)}_{\cut_y} =-2x'\sum_{i=2}^3y_i= -\sum_{i=2}^3 y_i^2 - \sum_{i=2}^3(2x'-y_i)y_i;
\end{equation*}
\begin{multline*}
    f_2 x = \underbrace{x(z_1+2\sum_{j=2}^8z_j+z_9)}_{\cut_{\text{inside}}} - \underbrace{x(2z_1+3z_2+4\sum_{j=3}^7z_j+3z_8+2z_9)}_{\cut_z} \\
    = -x(z_1+z_2+2\sum_{j=3}^7z_j+z_8+z_9) \\ =-z_1^2-\frac{z_2^2}{2}-\sum_{j=3}^7z_j^2-\frac{z_8^2}{2}-z_9^2 \\ -(x-z_1)z_1-\left(x-\frac{z_2}{2}\right)z_2-\sum_{j=3}^7(2x-z_j)z_j - \left(x-\frac{z_8}{2}\right)z_8-(x-z_9)z_9.
\end{multline*}

We have  extracted the negative squares of $y_i,z_j$ from $f_1 x'$ and $f_2 x$ in order to combine them together with the negative squares of the part of $f_0$ provided by $\cut_{\text{inside}}$. 
The other summands of $f_1 x'$ and $f_2 x$ are almost always negative, except for the minimal and maximal values of $y_i,z_j$. 
These cases happen exactly when a block is either all Red or all Blue. 
Then
\begin{multline*}
    f = f_0 + f_1 x' + f_2 x= 2\sum_{i=1}^3y_iy_{i+1} - y_1^2-2\sum_{i=2}^3y_i^2-y_4^2\\
    +2\sum_{j=1}^8z_jz_{j+1} - 2z_1^2-\frac{3}{2}z_2^2-2\sum_{j=3}^7z_j^2-\frac{3}{2}z_8^2-2z_9^2\\
    + 2(y_1-y_2)(z_4-z_5) -2(y_3-y_4)(z_5-z_6)\\
    -\sum_{i=2}^3(2x'-y_i)y_i\\
    -(x-z_1)z_1-\left(x-\frac{z_2}{2}\right)z_2-\sum_{j=3}^7(2x-z_j)z_j - \left(x-\frac{z_8}{2}\right)z_8-(x-z_9)z_9
\end{multline*}
\begin{multline*}
    =-\sum_{i=1}^3(y_i-y_{i+1})^2 - 2\left(z_1-\frac{z_2}{2}\right)^2-\sum_{j=2}^7(z_j-z_{j+1})^2 - 2\left(\frac{z_8}{2}-z_9\right)^2\\
    + 2(y_1-y_2)(z_4-z_5) -2(y_3-y_4)(z_5-z_6)\\
    -\sum_{i=2}^3(2x'-y_i)y_i\\
    -(x-z_1)z_1-\left(x-\frac{z_2}{2}\right)z_2-\sum_{j=3}^7(2x-z_j)z_j - \left(x-\frac{z_8}{2}\right)z_8-(x-z_9)z_9
\end{multline*}
\begin{multline*}
    = -(y_1-y_2-z_4+z_5)^2-(y_2-y_3)^2 -(y_3-y_4+z_5-z_6)^2\\
    - 2\left(z_1-\frac{z_2}{2}\right)^2 -(z_2-z_3)^2 -(z_3-z_4)^2 -(z_6-z_7)^2-(z_7-z_8)^2- 2\left(\frac{z_8}{2}-z_9\right)^2\\
    -\sum_{i=2}^3(2x'-y_i)y_i\\
    -(x-z_1)z_1-\left(x-\frac{z_2}{2}\right)z_2-\sum_{j=3}^7(2x-z_j)z_j - \left(x-\frac{z_8}{2}\right)z_8-(x-z_9)z_9.
\end{multline*}
Clearly, $f\leq0$. We are going to find all the cases when $f=0$. 
Notice that, for any $i\in\{2,3\},j\in\{1,\ldots,9\}$, there is a summand $-y_i(2x'-y_i)$ and a corresponding one for $z_j$. 
Thus, we need to consider only the cases when $y_2,y_3\in\{0,2x'\}$, $z_1,z_9\in\{0,x\}$ and $z_2,\ldots,z_8\in\{0,2x\}$. 
Suppose that $y_1\not\in\{0,2x'\}$; then $0<|y_1-y_2|<2x'$ and thus $y_1-y_2-z_2+z_3\not=0$, for any $z_2,z_3\in\{0,2x\}$, as $x'<x$. 
So $y_1=y_2$, and thus $z_2=z_3$. 
Similarly, $y_3=y_4$, and so $z_5=z_6$. 
Using other clauses of the expression, we conclude that $y_1=\ldots=y_4$, $z_2=\ldots=z_8$, and also we have $z_1=z_9=z_2/2=z_8/2$. 
This means that the colorings of the blocks alternate.

Now we need to prove that $\col(L_2)\not=\col(R_1)$. 
For the contrary, suppose that $\col(L_2)=\col(R_1)$. 
Then the cut between the blocks and the long intervals will be at most the sum: 
\begin{equation*}
4x + 4x + x + x + 2x' = 10x + 2x',\text{ where}
\end{equation*}
$4x$ is provided by $L_2$, $4x$ is provided by $R_1$,  $x$ is provided by $L_1$, $x$ is provided by $R_2$, and  $2x'$ provided by one of $L_2,R_1$. 
On the other hand, if $\col(L_2)\not=\col(R_1)$, then together $L_2$ and $R_1$ provide $7x+4x'$, for each of two possible colorings of the bottom blocks, and $L_1$ and $R_2$ provide $2x$, in total it is $9x+4x'$. 
This case is reachable, because we can choose the leftmost bottom block to have a color opposite to $\col(L_1)$. As $\col(R_2)$ is not fixed, we can choose the right one that will add $x$ as well. 
So we have to satisfy the inequality
\begin{equation*}
    10x+2x' < 9x + 4x'.
\end{equation*}
We assumed that $2x'>x$ so the inequality holds and thus the second case is optimal. 
We should note that, in the second case, $L_1$ and $R_2$ have the same color because $\col(\block^{\bot}_1)=\col(\block^{\bot}_9)$. 
We have shown that $L_1$ and $R_2$ are colored similarly and that $L_2$ and $R_1$ are colored differently. 
\end{proof}

\section{Construction of the graph}




Let $G$ be a cubic graph of size $n$. Let the long interval length be $\alpha := 53n-3$ while the short interval length is 1. 
All the intervals of $H$ belong to $[0,+\infty)$. 
This ray is split into zones and buffers in the following order:
\[
\zone{1},\buffer{1}{2},\zone{2},\buffer{2}{3},\ldots,\zone{n},\buffer{n}{1},\zone{1},\ldots
\]

Each zone (or buffer) is a disjoint union of \emph{fragments} of the same length, where the distance between two neighbor fragments of the same zone is the same and depends on the long interval length $\alpha$. 

\begin{compactitem}
    \item Here, $\zone{1},\ldots,\zone{n}$ are the zones that correspond to vertices of $G$. For a vertex $g_i\in V(G)$, $\zone{i}$ is the following disjoint union of fragments of the same size:
    \begin{equation*}
        \zone{i} = \bigcup_{j\in\mathbb{Z}} \bigl[53i + j(\alpha+3), 53i+j(\alpha+3)+32\bigr].
    \end{equation*}
    This zone usually contains the vertex and the join gadgets corresponding to $g_i\in V(G)$. The size of its fragments is 32. The distance $\alpha+3$ between the left endpoints of two neighbor fragments is called the \emph{phase}.
    
    \item Buffer zones. For two vertices $g_i,g_{i+1}\in V(G)$ (and also $g_{n-1},g_0$), we introduce a buffer zone between $\zone{i},\zone{i+1}$. It is denoted by $\buffer{i}{i+1}$ ($\buffer{n-1}{0}$ is denoted similarly) and is defined as follows:
    \begin{equation*}
        \buffer{i}{i+1} = \bigcup_{j\in\mathbb{Z}}\bigl[53i+j(\alpha+3)+32, 53(i+1)+j(\alpha+3)\bigr].
    \end{equation*}
    We need buffer zones in order to do the switching. Every fragment of a buffer zone has size 21.
\end{compactitem}
We need to write $\alpha+3$ instead of $\alpha$ because a long interval must start from the rightmost block of some 3-block and must terminate at the leftmost block of another 3-block. So, the length of long intervals must be lesser than the length of the phase by 3.

For $i\in\{1,\ldots,n\}$, add a vertex gadget $\intmodel{V}_i$ into the leftmost fragment of $\zone{i}$. For each $\intmodel{V}_i$ there are 3 long intervals starting from it; they terminate at a join gadget that has 3 new long intervals starting from it, and so on. This produces 3 link chains of long intervals. Each such chain eventually terminates at a corresponding edge gadget.

For every edge $g_ig_j\in E(G)$, where $i<j$, we choose a link chain starting from $\intmodel{V}_i$ and repeatedly apply  the switch procedure (introduced formally in the next paragraph) to this chain until it is in $\zone{j}$.  
Once it happens, this chain of $\intmodel{V}_i$ and one of the chains of $\intmodel{V}_j$ terminate at the same edge gadget $\intmodel{E}_{ij}$.  
No long interval starts from $\intmodel{E}_{ij}$. 
Then we choose another edge of $G$ and repeat this operation until all edges of $G$ are treated. 
If a chain does not participate in some switch procedure, then, during this procedure, the long intervals of this chain are joined by join gadgets that look the same as vertex gadgets.  
The composition of $H$ is displayed on \Cref{fig:reduction} on \cpageref{fig:reduction}.

\paragraph*{Switch Procedure}

Consider some adjacent vertices $g_i,g_j$ of the cubic graph $G$, they correspond to two vertex gadgets $\intmodel{V}_i,\intmodel{V}_j$ and to an edge gadget $\intmodel{E}_{ij}$ in the interval model of $H$. 
Suppose that there is another vertex gadget $\intmodel{V}'$ between $\intmodel{V}_i$ and $\intmodel{V}_j$. 
Then, there is a join gadget that corresponds to $\intmodel{V}'$ between any pair of join gadgets of $\intmodel{V}_i,\intmodel{V}_j$. 
We require that a gadget can be intersected only by long intervals, in particular, we forbid any gadget to intersect $\intmodel{E}_{ij}$, so we have to switch some link chain starting from $\intmodel{V}_i$ with all three link chains that start from $\intmodel{V}'$, and similarly for any other vertex gadget between $\intmodel{V}_i$ and $\intmodel{V}_j$. 

We solve it by the \emph{switch procedure} which is displayed on \Cref{fig:switch_procedure}. 
The line $\mathbb{R}$ is split into zones and buffers between zones. 

For $i\in\{1,\ldots,n\}$, $\zone{i}$ corresponds to the vertex gadget $\intmodel{V}_i$, it contains $\intmodel{V}_i$ and most of the join gadgets of the link chains starting from $\intmodel{V}_i$. 
Every fragment of $\buffer{i}{i+1}$ is placed between the fragments of $\zone{i}$ and $\zone{i+1}$. 
$\buffer{i}{i+1}$ contains switch gadgets that are used to let a chain starting from $\intmodel{V}_i$ pass all three chains starting from $\intmodel{V}_{i+1}$, this is exactly what is shown on \Cref{fig:switch_procedure}.
By repeatedly iterating the switching, we pass all the vertex gadgets between $\intmodel{V}_i$ and $\intmodel{V}_j$ and eventually connect the corresponding link chains to a common edge gadget $\intmodel{E}_{ij}$. 
In order to justify the correctness of \Cref{fig:switch_procedure}, we describe the precise positions of each gadget and long interval below.

\begin{figure}[ht]
    \centering
    \includegraphics[width=\textwidth]{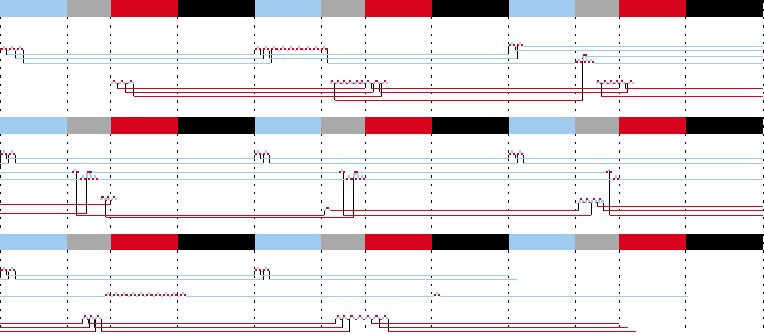}
    \caption{The switch procedure. $\zone{i}$ is Blue. $\zone{i+1}$ is Red. $\buffer{i}{i+1}$ between them is gray. Black fragment contains all other zones and buffers. {\bf One should read this figure from left to right and from top to bottom, just like a book.}}
    \label{fig:switch_procedure}
\end{figure}


Consider two vertices $g_i,g_{i+1}\in V(G)$. $\zone{i}$ is colored in Blue, $\buffer{i}{i+1}$ is colored in gray, $\zone{i+1}$ -- in Red, the black color corresponds to the rest -- the union of all other zones and buffers. Suppose that $g_i$ is adjacent to some $g_j$, for $j\not\in\{i-1,i+1\}$. Then, for some chain of long intervals that starts from the vertex gadget $\intmodel{V}_i$ of $g_i$, we need to put it to the right of $\zone{i+1}$, that is to switch it with the long interval chains starting from $\intmodel{V}_{i+1}$. On \Cref{fig:switch_procedure}, it is equivalent to move one Blue interval to the black segment by passing the orange segment beforehand. We suppose that the long intervals starting from $\intmodel{V}_i$ are Blue and those starting from $\intmodel{V}_{i+1}$ are Red, and so are all other long intervals that are connected to these gadgets by a chain of join gadgets. 
Say that a gadget is in the fragment $[a,b]$ of some zone if its leftmost and rightmost points are in $a$ and $b$. We are going to show that no gadgets intersect each other by considering each of nine buffer zones on \Cref{fig:switch_procedure}.

\begin{enumerate}
    \item The first buffer does not contain any gadgets.
    \item The second buffer intersects two join gadgets. The Blue one is in $[0,32]$ of $\zone{i}$ and in $[0,3]$ of $\buffer{i}{i+1}$. Its long intervals terminate at $\{0,4,8\}$ of $\zone{i}$ and start from $\{3,7\}$ of $\zone{i}$ and from $3$ of $\buffer{i}{i+1}$. The Red one is in $[4,21]$ of $\buffer{i}{i+1}$ and in $[0,11]$ of $\zone{i+1}$. Its long intervals terminate at $\{0,4,8\}$ of $\zone{i+1}$ and start from $6.5$ of $\buffer{i}{i+1}$ and from $\{3,7\}$ of $\zone{i+1}$.
    \item The third buffer contains the first switch gadget in $[0,9]$. Its long intervals terminate at $\{0,3.5\}$ and start from $\{5.5,9\}$ of $\buffer{i}{i+1}$. It also contains the Red join gadget in $[10,21]$ of $\buffer{i}{i+1}$ and in $[0,7]$ of $\zone{i+1}$. Its long intervals terminate at $\{0,4\}$ of $\zone{i+1}$ and start from $12.5$ of $\buffer{i}{i+1}$ and from $3$ of $\zone{i+1}$.
    \item The fourth buffer contains  the second switch gadget in $[6,15]$. Its long intervals terminate at $\{6,9.5\}$ and start from $\{11.5,15\}$. It also contains the second part of the first switch gadget in $[2.5,5.5]$. Its long intervals terminate at $2.5$ and start from $4.5$. It also contains a Red join gadget in $[16,21]$ of $\buffer{i}{i+1}$ and in $[0,3]$ of $\zone{i+1}$. Its long intervals terminate at $21$ and start from $18.5$ of $\buffer{i}{i+1}$.
    \item The fifth buffer contains the second part of the second switch gadget in $[8.5,11.5]$. Its long intervals terminate at $8.5$ and start from $10.5$. It also contains the third switch gadget in $[12,21]$. Its long intervals terminate at $\{12,15.5\}$ and start from $\{17.5,21\}$. It also contains a Red join gadget in $[1.5,4.5]$. Its long intervals terminate at $1.5$ and start from $4.5$.
    \item The sixth buffer contains a Blue join gadget in $[18,21]$. Its long intervals terminate at $18$ and start from $21$. It also contains the second part of the third switch gadget in $[14.5,17.5]$. Its long intervals terminate at $14.5$ and start from $16.5$. It also contains a Red join gadget in $[1.5,13.5]$. Its long intervals terminate at $1.5,7.5$ and start from $10.5,13.5$.
    \item The seventh buffer intersects a Blue join gadget that is in $[18,21]$ of $\buffer{i}{i+1}$, in $[0,32]$ of $\zone{i+1}$ and in $[0,4]$ of $\buffer{i+1}{i+2}$. Its long intervals terminate at $18$ of $\buffer{i}{i+1}$ and start from $4$ of $\buffer{i+1}{i+2}$. It also contains a Red join gadget in $[7.5, 16.5]$. Its long intervals terminate at $\{7.5,10.5,13.5\}$ and start from $\{10,13,16.5\}$.
    \item The eighth buffer intersects a Red join gadget. It is in $[7,21]$ of $\buffer{i}{i+1}$ and in $[0,11]$ of $\zone{i+1}$. Its long intervals terminate at $\{7, 10, 13.5\}$ of $\buffer{i}{i+1}$ and start from $\{3,7,11\}$ of $\zone{i+1}$.
    \item The ninth buffer is empty.
\end{enumerate}

\section{Gadget partitions}

In this section, we describe and prove how exactly each gadget of $H$ can be colored under a $\maxcut$ partition of $H$.
Sometimes, for a property to hold, it is required that $H$ satisfies certain numerical constraints. 
We now list these constraints, and, later, when we provide an explicit construction of $H$, we make sure that all of them are satisfied. 
Suppose that the cubic graph $G$ has $n$ vertices. Then, assume the following.
\begin{enumerate}
\item Every block of $H$ is covered by at most $\eta := 3n$ long intervals. 
\item Every long interval of $H$ covers at most $\mu:=30n+16$ blocks.
\end{enumerate}

For each gadget of $H$, we assume that it is covered by at most $\eta$ long intervals.

\begin{lemma} \label{lemma:3block_stable}
Let $\intmodel{B} = (\block_1, \block_2, \block_3)$ be a 3-block in $H$ of size $(x,2x,x)$. Then, for any $\maxcut$ partition of $H$, $\col(\block_1)=\col(\block_3)$, and all except for at most $\eta$ intervals of $\block_2$ have the opposite color. 
\end{lemma}
\begin{proof}
Within the graph $H$, there are five ways how a long interval can intersect a 3-block.
All these five ways are displayed on \Cref{fig:3blocks0} on \cpageref{fig:3blocks0}.
Let $\intmodel{L}_1,\ldots,\intmodel{L}_5$ be the corresponding sets of long intervals that intersect a given 3-block within the graph $H$. That is,
\begin{compactitem}
\item $\intmodel{L}_1$ denotes all the long intervals that terminate at $\block_1$;
\item $\intmodel{L}_2$ denotes all the long intervals that cover $\block_1$, $\block_2$, and $\block_3$; 
\item $\intmodel{L}_3$ denotes all the long intervals that start from $\block_3$;
\item $\intmodel{L}_4$ denotes all the long intervals that terminate at $\block_2$;
\item $\intmodel{L}_5$ denotes all the long intervals that start from $\block_2$. 
\end{compactitem}

We consider some $\maxcut$ partition of $H$.
For $i$ in $[5]$, let $r_i$ and $b_i$ be the numbers of Red and Blue intervals in $\intmodel{L}_i$.
\emph{W.l.o.g.}, assume that most of the intervals of $\intmodel{L}_1$ get the color Red, that is, $r_1\geq b_1$.
Also let  $y_1,2x-y_2$, and $y_3$ be the number of intervals of respectively $\block_1,\block_2,\block_3$ that are colored Blue. Hence  $x-y_1,y_2$, and $x-y_3$ are  the numbers of intervals of respectively $\block_1,\block_2$, and $\block_3$ that are colored Red.
We want to show that one of the two following conditions must hold:
\begin{itemize}
    \item either $y_1=y_3=0$ and $y_2\in[0,\eta]$, or
    \item $y_1=y_3=x$ and $y_2\in[2x-\eta,2x]$.
\end{itemize}

Let us first calculate the cut value contributed by the 3-block.
It consists of the three following parts.
\begin{enumerate}

 \item  The number of cut edges formed between the blocks and the long intervals is $r_1y_1 + b_1(x-y_1) + r_2(y_1 + 2x - y_2 + y_3) + b_2(2x-y_1 + y_2 - y_3) + r_3y_3 + b_3(x-y_3) + r_4(y_1+2x-y_2) + b_4(x-y_1+y_2) + r_5(2x-y_2+y_3) + b_5(y_2+x-y_3)$.

 \item  The number of cut edges formed by the short intervals of the same block among themselves is $y_1(x-y_1) + y_2(2x-y_2) + y_3(x-y_3)$.

 \item  The number of cut edges formed by the short intervals of the different blocks is $y_1y_2 + (x-y_1)(2x - y_2) + y_2y_3 + (2x-y_2)(x-y_3)$.

\end{enumerate}

Denote the value $r_i-b_i$ by $\Delta_i$.
Denote by $C$ the cut value contributed by the 3-block $\intmodel{B}$, which is the sum of the three aforementioned numbers:
\begin{align*}
C = & 4x^2 + x(2r_2 + 2r_4 +2r_5 + b_1+2b_2+b_3+b_4+b_5) \\
& - xy_1 -xy_3-y_1^2 - y_2^2 - y_3^2+ 2y_1y_2 + 2y_2y_3 \\
&+ \Delta_1y_1 + \Delta_2(y_1-y_2+y_3) + \Delta_3y_3 + \Delta_4(y_1 - y_2) +\Delta_5(y_3-y_2).
\end{align*} 

Now let us modify the cut by making $\col(\block_1)=\col(\block_3)=R$ and $\col(\block_2)=B$. This means that we have $y_1 = y_2 = y_3 = 0$. So the cut value, denoted by $C_{RBR}$, contributed by the 3-block $\intmodel{B}$ in this case is:
\begin{align*}
C_{RBR} = 4x^2  + x(2r_2 + 2r_4 +2r_5 + b_1+2b_2+b_3+b_4 + b_5).
\end{align*}

If we instead have $\col(\block_1)=\col(\block_3)=B$ and $\col(\block_2)=R$ (i.e., $y_1 = y_3 = x, y_2 = 2x$), then the cut value, denoted by $C_{BRB}$, contributed by the 3-block $\intmodel{B}$ would be:
\begin{align*}
C_{BRB} = 4x^2  + x(r_1 + 2r_2 + r_3 + r_4 + r_5 + 2b_2 + 2b_4 + 2b_5).
\end{align*}

\emph{W.l.o.g.}, assume that $C_{RBR} \geq C_{BRB}$. 
This implies that $C_{RBR} - C_{BRB} \geq 0$, which means that $\Delta_1 +\Delta_3 - \Delta_4 - \Delta_5 \leq 0$.

Notice that $C - C_{RBR}$ equals
\begin{align*}
& - y_1^2 - y_2^2 - y_3^2 - xy_1 -xy_3 + 2y_1y_2 + 2y_2y_3\\
 &\hspace{12pt}+ \Delta_1y_1 + \Delta_2(y_1-y_2+y_3) + \Delta_3y_3 + \Delta_4(y_1 - y_2) + \Delta_5(y_3-y_2)\\
&= -\left[(y_1 - y_2 + y_3) - \frac{1}{2}(\Delta_2 + \Delta_4 + \Delta_5)\right]^2 + \frac{1}{4}(\Delta_2 + \Delta_4 + \Delta_5)^2  \\
& +2y_1y_3 - xy_1 -xy_3 + \Delta_1y_1  + \Delta_3y_3  - \Delta_4y_3 - \Delta_5y_1\\
&\leq \frac{1}{4}(\Delta_2 + \Delta_4 + \Delta_5)^2 + 2y_1y_3 - xy_1 -xy_3 + (\Delta_1-\Delta_5)y_1  + (\Delta_3-\Delta_4)y_3\\
&= \frac{1}{4}(\Delta_2 + \Delta_4 + \Delta_5)^2 + 2y_1y_3 - xy_1 -xy_3 + (\Delta_1-\Delta_5)(y_1-y_3) + (\Delta_1  + \Delta_3  - \Delta_4 - \Delta_5)y_3\\
&\leq \frac{1}{4}(\Delta_2 + \Delta_4 + \Delta_5)^2 + 2y_1y_3 - xy_1 -xy_3 + (\Delta_1-\Delta_5)(y_1-y_3)\\
&\hspace{12pt}(\text{since } \Delta_1  + \Delta_3  - \Delta_4 -\Delta_5 \leq 0)\\
&= \frac{1}{4}(\Delta_2 + \Delta_4 + \Delta_5)^2 + \frac{1}{4}(\Delta_1-\Delta_5)^2 + y_1(y_1-x) + y_3(y_3 - x) - \left(y_1 - y_3 -\frac{1}{2}(\Delta_1-\Delta_5)\right)^2  \\
&\leq \frac{1}{4}(\Delta_2 + \Delta_4 + \Delta_5)^2 + \frac{1}{4}(\Delta_1-\Delta_5)^2 + y_1(y_1-x) + y_3(y_3 - x).
\end{align*}

Notice that if $y_1 \notin \{0, x\}$, then $(x-y_1)y_1 \geq x -1$. Similarly, if $y_3 \notin \{0, x\}$, then $(x-y_3)y_3 \geq x -1$. Since we assumed that either $(y_1 \notin \{0, x\})$ or $(y_3 \notin \{0, x\})$, it implies that $y_1(y_1-x) + y_3(y_3 - x) \leq 1 - x $. So we have $C - C_{RBR} \leq \frac{1}{4}(\Delta_2 + \Delta_4 + \Delta_5)^2 + \frac{1}{4}(\Delta_1-\Delta_5)^2 + 1 - x$. But since $x > n^6$ and each $\Delta_i < n^3$, we conclude that $C - C_{RBR} < 0 \implies C < C_{RBR}$. This contradicts the maximality of $C$. Therefore, we conclude that $\bigl(y_1 \in \{0, x\}\bigr)$ and $\bigl(y_3 \in \{0, x\}\bigr)$.

Now we show that $y_1=y_3$. 
For the contrary, suppose that $y_1 \not= y_3$. 
Then, $|y_1 - y_3| = x$. 
In this case, the value $C - C_{RBR}$ is :
\begin{align*}
 & \frac{1}{4}(\Delta_2 + \Delta_4 + \Delta_5)^2 + \frac{1}{4}(\Delta_1-\Delta_5)^2 + y_1(y_1-x) + y_3(y_3 - x) - \left(y_1 - y_3 -\frac{1}{2}(\Delta_1-\Delta_5)\right)^2\\
&= \frac{1}{4}(\Delta_2 + \Delta_4 + \Delta_5)^2 + \frac{1}{4}(\Delta_1-\Delta_5)^2  - \left(y_1 - y_3 -\frac{1}{2}(\Delta_1-\Delta_5)\right)^2\\
&\leq \frac{1}{4}(\Delta_2 + \Delta_4 + \Delta_5)^2   - x^2 + x|\Delta_1-\Delta_5|.
\end{align*} 

Since $x > n^6$ and each $\Delta_i < n^3$, we have that $C - C_{RBR} < 0$, which  contradicts the first cut being maximal. So we conclude that $y_1=y_3$.

\emph{W.l.o.g.}, assume that $y_1 = y_3=0$. 
We want to show that $y_2\in[0,\eta]$.
As $y_1=y_3=0$, the cut value contributed by the 3-block $\intmodel{B}$ is now
\begin{equation*}
C =  4x^2   + x(2r_2 + 2r_4 + 2r_5 + b_1+2b_2+b_3+b_4+b_5)
-y_2(y_2 + \Delta_2 + \Delta_4 + \Delta_5).
\end{equation*} 

Notice that $C-C_{RBR} = -y_2(y_2 + \Delta_2 + \Delta_4 + \Delta_5)$.
Thus, $C\geq C_{RBR}$ only if $0\leq y_2 \leq -(\Delta_2+\Delta_4+\Delta_5)$.
As $|\Delta_2+\Delta_4+\Delta_5|\leq\eta$, we are done.
The case when $y_1=y_3=x$ is treated similarly.

\end{proof}

\begin{lemma}\label{lemma:edge_gadget_stable}
Let $\intmodel{E}=\block^1\sqcup\block^2\sqcup\{S_{12}\}$ be an edge gadget of $H$ together with two long intervals $L_1,L_2$ terminating at $\block_1^1$ and $\block_2^2$.
Suppose that $\intmodel{E}$ is covered by at most $\eta$ long intervals.
Then, 
\begin{inparaenum}[(i)]
\item the colorings of 3-blocks are almost alternating except for at most $\eta$ intervals of the central blocks, and 
\item if $\col(L_1)\not=\col(L_2)$, then the $\maxcut$ value provided by $\intmodel{E}$ is greater than the $\maxcut$ value in the case $\col(L_1)=\col(L_2)$, by at least $k-2\eta$.
\end{inparaenum}
\end{lemma}
\begin{proof}
Suppose that $\intmodel{E}$ is covered by $\lambda<\eta$ long intervals. By \Cref{lemma:3block_stable}, the $\maxcut$ of $\block^1$ and $\block^2$, 
is either alternating or almost alternating except for $\Delta$ intervals of $\block_2^1$ or $\block_2^2$, where $\Delta<\eta$ is the difference between the numbers of Red and Blue long covering intervals.

Suppose that $\col(L_1)=\col(L_2)=B$; then $\col(\block_1^1)=\col(\block_3^1)=R$. The color of $\block_2^1$ must be Blue except for at most $\eta$ intervals. For $\block^2$ there are two cases. 
\begin{enumerate}
    \item The color of $\block_2^2$ is Red except for at most $\eta$ intervals, and $\col(\block_1^2)=\col(\block_3^2)=B$. Then, for any $\col(S_{12})$, the cut size is at most  $8k^2+(6+4\lambda)k + \frac{1}{4}\Delta^2+\frac{3}{2}\lambda+1$.
    \item The color of $\block_2^2$ is Blue except for at most $\eta$ intervals, and $\col(\block_1^2)=\col(\block_3^2)=R$. Then $\col(S_{12})=B$. Then the cut size is at most $8k^2+(6+4\lambda)k+\frac{1}{2}\Delta^2+2\lambda$.
\end{enumerate}

Suppose that $\col(L_1)=B$ and $\col(L_2)=R$. Then $\block^2$ is partitioned as $RBR$. $\block^1$ is also partitioned as $RBR$. Then the cut size would be at least $8k^2+(7+4\lambda)k+\frac{1}{2}\Delta^2+2$.

The difference between the value provided by $\intmodel{E}$ when $\col(L_1)\not=\col(L_2)$ and the value in the case when they are equal is at least $k-2\lambda$, which is at least $k-2\eta$.
\end{proof}

\begin{lemma}\label{lemma:switch_gadget_stable}
Suppose that the first part of a switch gadget is covered by $\eta$ long intervals. 
Then, in any its $\maxcut$ partition, the coloring of $\block_1^{\text{bot}},\ldots,\block_9^{\text{bot}}$ is alternating and the coloring of $\block_1^{\text{top}},\ldots,\block_4^{\text{top}}$ is alternating except for at most $\eta$ intervals within $\block_1^{\text{top}}$ or $\block_4^{\text{top}}$.
\end{lemma}
\begin{proof}
Let the number of Red and Blue long intervals covering the switch gadget be $r$ and $b$ respectively, and denote the quantity $(r-b)$ by $\Delta$. 
Suppose that $r>b$. Let $y_i,z_j$, for $i\in\{1,\ldots,4\},j\in\{1,\ldots,9\}$, be the same as in \Cref{lemma:switch_gadget}. Compute the size of the cut:
\begin{multline*}
    \cut = 12x'^2 + 28x^2 + 16x'x \\
    -(y_1-y_2-z_4+z_5)^2-(y_3-y_4+z_5-z_6)^2\\
    - 2\left(z_1-\frac{z_2}{2}\right)^2 -(z_2-z_3)^2 -(z_3-z_4)^2 -(z_6-z_7)^2-(z_7-z_8)^2- 2\left(\frac{z_8}{2}-z_9\right)^2\\
    -(y_2-y_3)^2 -\sum_{i=2}^3(2x'-y_i)y_i\\
    -(x-z_1)z_1-\left(x-\frac{z_2}{2}\right)z_2-\sum_{j=3}^7(2x-z_j)z_j - \left(x-\frac{z_8}{2}\right)z_8-(x-z_9)z_9\\
    -\Delta\left(\sum_{i=1}^4(-1)^iy_i+\sum_{j=1}^9(-1)^jz_j\right) + (8x+4x')(r+b)
\end{multline*}
\begin{multline*}
    = 12x'^2 + 28x^2 + 16x'x \\
    -(y_1-y_2-z_4+z_5-\Delta/4)^2-(y_3-y_4+z_5-z_6-\Delta/4)^2\\
    - 2\left(z_1-\frac{z_2}{2}-\Delta/4\right)^2 -(z_2-z_3+\Delta/4)^2 -(z_3-z_4-\Delta/4)^2 \\
    -(z_6-z_7+\Delta/4)^2-(z_7-z_8-\Delta/4)^2- 2\left(\frac{z_8}{2}-z_9+\Delta/4\right)^2\\
    -(y_2-y_3+\Delta/4)^2 -\sum_{i=2}^3(2x'-y_i)y_i\\
    -(x-z_1)z_1-\left(x-\frac{z_2}{2}\right)z_2-\sum_{j=3}^7(2x-z_j)z_j - \left(x-\frac{z_8}{2}\right)z_8-(x-z_9)z_9\\
    + \frac{\Delta}{2}(y_1-y_4)+ (8x+4x')(r+b) +\frac{11}{16}\Delta^2.
\end{multline*}
Consider only those summands that  participate in the size of the cut when the coloring of the blocks alternates, i.e., when, for all $i,j$, $y_i=z_j=0$:
\begin{equation*}
\cut_{\text{alter}} = 12x'^2+28x^2+16x'x+(8x+4x')(r+b).    
\end{equation*}
If we choose $x>\frac{(2\Delta+1)^2}{2}+\frac{11}{4}\Delta^2$, then the distance between some variable (except for $y_1,y_4$) and the closest end of its domain cannot be greater than $\Delta$. For example, suppose that $\min(z_1,x-z_1)>\Delta$. Then $(x-z_1)z_1>\Delta x+\frac{11}{16}\Delta^2$, and so
\begin{equation*}
    \cut \leq \cut_{\text{alter}} - (x-z_1)z_1 + \frac{\Delta}{2}(y_1-y_4) + \frac{11}{16}\Delta^2 < \cut_{\text{alter}}.
\end{equation*}
So, in this case it will not be a maximum cut. 

Suppose that $|z_j-z_{j+1}|>\Delta$, then either $|z_j-z_{j+1}|>2x-2\Delta$ when $j\not\in\{1,4,5,8\}$, or $|z_j-z_{j+1}|>x-2\Delta$ when $j\in\{1,8\}$. But then one of the clauses will be greater than $(x-2\Delta-\Delta/4)^2$, hence, greater than $\Delta x+\frac{11}{16}\Delta^2$. Similarly, $|y_2-y_3|<\Delta$. Consider the variables $z_4,z_5,z_6$. Choose $x'$ between $x/2$ and $x$ such that $(2x'-2x+2\Delta+\Delta/4)^2$ is greater than $\Delta x+\frac{11}{16}\Delta^2$, as $x>\Delta^2$, it is easy to choose a convenient value for $x'$. For such $x'$, we will have $|z_4-z_5|,|z_5-z_6|<\Delta$.

Denote $d_{12}=y_1-y_2$ and $d_{34}=y_3-y_4$. Then $y_1-y_4 < |d_{12}| + \Delta+|d_{34}|$. Then we can choose only those $d_{12},d_{34}$ that satisfy
\begin{equation*}
    -(d_{12}-z_4+z_5-\Delta/4)^2-(d_{34}+z_5-z_6-\Delta/4)^2+\frac{\Delta}{2}(|d_{12}|+\Delta+|d_{34}|)+\frac{11}{16}\Delta^2 >0
\end{equation*}
as, otherwise, it will not be a maximal cut. Denote $w = -z_4+z_5-\Delta/4, w'=z_5-z_6-\Delta_4$, we know that $|w|,|w'|\leq5\Delta/4$. Then the expression can be written in the following form:
\begin{equation*}
    -d_{12}^2 + 2d_{12}w-w^2-d_{34}^2+2d_{34}w'-w'^2 + \Delta/2(|d_{12}|+|d_{34}|)+\frac{19}{16}\Delta^2>0
\end{equation*}
\begin{equation*}
    \Rightarrow -(d_{12} - w-\Delta/4)^2-(d_{34}-w'-\Delta/4)^2+\frac{\Delta}{2}(|w|+|w'|) + \frac{2\Delta^2}{16} + \frac{19}{16}\Delta^2>0.
\end{equation*}

Take $|d_{12}| \geq 4\Delta$. Then 
\begin{equation*}
    -(d_{12} - w-\Delta/4)^2-(d_{34}-w'-\Delta/4)^2+\frac{\Delta}{2}(|w|+|w'|) + \frac{2\Delta^2}{16} + \frac{19}{16}\Delta^2 
\end{equation*}
\begin{equation*}
    \leq -\left(\frac{5\Delta}{2}\right)^2 + \frac{5\Delta^2}{4} + \frac{21\Delta^2}{16}<0.
\end{equation*}
The same is true for $|d_{34}|$. We now can imply that $|y_1-y_4|<9\Delta$, otherwise the cut is not maximal. This is an important result because we can show now that all the blocks except for $y_1,y_4$ are monochromatic. Take any $x>\frac{9}{2}\Delta^2+\frac{11}{16}\Delta^2 = \frac{83}{16}\Delta^2$, then suppose that for some variable except for $y_1,y_4$, say $z_1$ for example: $z_1\not\in\{0,x\}$. Then $(x-z_1)z_1\geq x-1 >\frac{9}{2}\Delta^2+\frac{11}{16}\Delta^2$. So the cut will not be maximal in this case. Same is true for any variable other than $y_1,y_4$. So we can conclude that $2z_1=z_2=\ldots=z_8=2z_9\in\{0,2x\}$, and $y_2=y_3\in\{0,2x'\}$.

Suppose w.l.o.g. that all the variables except for $y_1,y_4$ are equal to $0$. Then $\cut-\cut_{\text{alter}}$ equals
\begin{equation*}
     -(y_1-\Delta/4)^2 - (y_4+\Delta/4)^2 +\frac{\Delta}{2}(y_1-y_4)+ \frac{\Delta^2}{8} = -y_1(y_1 -\Delta) -y_4(y_4 + \Delta).
\end{equation*}
One can see now that the cut will be optimal if $y_1=\Delta/2,y_4=0$ for $\Delta>0$. For the case when $\Delta<0$ the optimal cut will be when $y_1=0$ and $y_4=-\Delta/2$. 
Recall that $\eta \geq r+b$. As $\Delta = (r-b)$, we have $|\Delta/2| < \eta$.
\end{proof}

\paragraph*{Reduction}

At first, we return to the conditions about $H$ that we stated at the beginning of \Cref{section:gadgets} and that we assumed to hold. 

\begin{enumerate}
\item Every block of $H$ is covered by at most $\eta := 3n$ long intervals. 
\item Every long interval of $H$ covers at most $\mu:=30n+16$ blocks.
\end{enumerate}

We need to verify that the graph $H$ that we have just constructed indeed satisfies them. 
In total, there are $3n$ long intervals starting from vertex gadgets, so every block is intersected by at most $3n$ long intervals.  
Every long interval covers $n-1$ zones and $n$ buffers, each zone contains a join gadget that has at most ten 3-blocks, each buffer contains at most one switch gadget that has 16 blocks. 
Thus, those 3 conditions are satisfied for $H$.


Let us say that a partition of $V(H)$ in 2 classes $\{R,B\}$ is \emph{good} if the following conditions are satisfied.
\begin{enumerate}
    \item For any gadget, its coloring is alternating or almost alternating except for $\block_2^i$ of a 3-block $\block^i$ or one of $\block_1^{\text{top}},\block_4^{\text{top}}$ of a switch gadget that contain at most $\eta$ intervals of the other color.
    \item For any short or long interval $L$ that starts from or terminates at a block $\block$ of size $x',x$, or $2x$, we have $\col(L)\not=\col(\block)$.
\end{enumerate}

\begin{lemma}\label{lemma:reduction_good}
Any $\maxcut$ partition of $H$ is good.
\end{lemma}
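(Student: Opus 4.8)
The plan is to obtain both defining properties of a good colouring gadget by gadget, by combining the stability lemmas already proved with the observation that a global $\maxcut$ is, in particular, locally optimal on every gadget. First I would fix the scale parameter $x$. Each of \Cref{lemma:3block_stable,lemma:vertex_gadget_stable,lemma:edge_gadget_stable,lemma:link_gadget_stable,lemma:switch_gadget,lemma:switch_gadget_stable} holds once $x$ exceeds a threshold that is polynomial in the number of long intervals meeting the gadget and in the number of blocks a long interval can cross. Since the construction of $H$ uses only $\mathrm{poly}(n)$ gadgets, I would choose a single $x=\mathrm{poly}(n)$ that exceeds all of these thresholds at once; this keeps $|V(H)|$ polynomial, so that the reduction runs in polynomial time, and lets every local lemma be invoked with the same $x$.

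Next I would bound the number of long intervals meeting any one gadget. Each edge of $G$ contributes a single chain of long intervals travelling from a vertex gadget to the edge gadget that represents it, so at any vertical slice of the picture at most $|E(G)|=3n/2$ such chains are active. Consequently, for every gadget the total number $r+b$ of overlapping long intervals is at most $3n/2$, and hence the exceptional quantity $\tfrac12|r-b|$ (respectively $\tfrac12|r-b|+1$ for the vertex gadget) occurring in each stability lemma is at most $3n/2$. This is exactly the slack permitted by condition~(1) in the definition of a good colouring.

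The heart of the argument is a local-optimality reduction. Because the partition under consideration is a $\maxcut$ of $H$, re-colouring only the blocks and short link intervals internal to a single gadget, while freezing all remaining colours, cannot increase the cut; therefore the restriction of the partition to that gadget is a $\maxcut$ of the graph consisting of the gadget together with its externally coloured long intervals, which is precisely the hypothesis of the matching stability lemma. Applying \Cref{lemma:3block_stable} to each 3-block, \Cref{lemma:vertex_gadget_stable} to each vertex gadget, \Cref{lemma:edge_gadget_stable} to each edge gadget, \Cref{lemma:link_gadget_stable} to each link and stretch gadget, and \Cref{lemma:switch_gadget,lemma:switch_gadget_stable} to each switch gadget then establishes condition~(1) uniformly. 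For condition~(2) I would run the same local-optimality idea on a single long or short link interval: such an interval fully overlaps the outer block it leaves from and the one it arrives in, each contributing $\Theta(x)$ to the cut when it is coloured oppositely, whereas over the $O(n)$ fully overlapped 3-blocks the two colour choices differ by only $O\!\left(n\cdot\tfrac12|r-b|\right)=O(n^2)$; with $x=\mathrm{poly}(n)$ chosen large enough, the leave/arrive contribution dominates and forces the opposite colour, save for the stated exception of an interval arriving in an edge gadget.

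The main obstacle is the apparent circularity of the stability lemmas: each assumes the colours of the intervals overlapping a gadget to be fixed, while globally these colours are themselves determined by neighbouring gadgets. The local-optimality framing dissolves this, since the stability lemmas are robust — their conclusions hold for an arbitrary fixed colouring of the external intervals — so they may be applied to every gadget independently, with whatever colours the global optimum happens to assign elsewhere, and no inductive bootstrapping between gadgets is required. The only remaining point requiring care is that the $\mathrm{poly}(n)$ many error terms $\tfrac12|r-b|$ not accumulate; this is handled by the slice bound, which caps each such term at $3n/2$ locally, together with the choice of $x$ large enough that the $\Theta(x)$ gains from correct leave/arrive colouring dominate every error term across all gadgets simultaneously.
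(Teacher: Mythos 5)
Your treatment of condition (1) is essentially the paper's: a global $\maxcut$ is locally optimal on every gadget, so the stability lemmas apply with the external colours frozen, and the slack $\tfrac12|r-b|\leq 3n/2$ is within what the definition of a good colouring permits. The genuine gap is in condition (2). Your local move recolours a single long interval and claims a gain of $\Theta(x)$ from each of the two outer blocks it touches ``when it is coloured oppositely'' --- but this presupposes that the block it leaves from and the block it arrives in have the \emph{same} colour, which is precisely the coherence property that has to be proved. If those two blocks have different colours, condition (2) is violated by that interval no matter which colour it carries, and flipping the interval alone changes the cut by exactly zero: it gains $x$ at one block and loses $x$ at the other. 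Moreover, every gadget along such an ``incoherent'' chain can simultaneously be locally optimal in your sense (each gadget's colouring compatible with the interval on one of its two sides), so no single-interval or single-gadget move detects the defect; contrary to your closing claim, some propagation between gadgets is unavoidable here.

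The paper closes exactly this hole with a non-local move: if some interval violates condition (2), one recolours that interval together with \emph{all} intervals and blocks of the sequence downstream of it, up to its edge gadget, so that condition (2) holds along the entire recoloured suffix. The repaired violation yields a gain of at least $x$, while the loss is bounded by $k$ (by \Cref{lemma:edge_gadget_stable}, the edge gadget's contribution can drop when the colour of an arriving interval changes) plus $O(n^5)$ from intersections of the recoloured blocks with frozen long intervals of other chains. This also exposes a secondary omission in your parameter accounting: because the corrective move reaches the edge gadget, one needs $x$ to dominate $k$ (and $k>\varepsilon\approx n^6$), not merely the $O(n^2)$ imbalance terms; your stated thresholds for $x$ never involve $k$. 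Everything remains polynomial in $n$, so the reduction is still efficient, but without the downstream-chain recolouring and the requirement $x\gg k$ your argument for condition (2) does not go through.
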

\begin{proof}
The condition (1) is provided by \Cref{lemma:switch_gadget_stable,lemma:3block_stable}. 
Let $L$ be the leftmost interval that does not satisfy (2), suppose that it starts from a block $\block$ of size at least $x'>\frac{x}{2}$ and $\col(L)=\col(\block)$. $L$ is a part of some link chain that connects a vertex gadget $\intmodel{V}$ to an edge gadget $\intmodel{E}$. Invert $\col(L)$ and modify the colors of all gadgets and short link and long intervals of this chain that are between $L$ and $\intmodel{E}$ so that all intervals of this chain satisfy (2). The gain after this operation is at least $\frac{x}{2}$ because $\col(L)\not=\col(\block)$. Denote by $l$ the number of short link and long intervals in the chain, it is at most $10\times3n\times 9\times \frac{3n}{2}$. The loss is at most the sum of the following values:
\begin{compactitem}
\item $k$, if both intervals terminating at $\intmodel{E}$ now have the same color;
\item $2\eta\mu l$ -- the cut between the gadgets of $H$ covered by the intervals of the chain, where $\eta$ is an upper bound for the number of long intervals covering a gadget and $\mu$ is an upper bound on blocks that a long interval intersects;
\item $2\eta l$ -- the cut between the long intervals of $H$ intersected by the intervals of the chain;
\item $2\eta^2 l$ -- the cut between the gadgets of the chain and the long intervals of $H$ covering them (the number of gadgets in the chain also equals $l$). 
\end{compactitem}
As $\eta,\mu\in O(n)$, and $l\in O(n^2)$, the loss is in $O(k+n^4)$. We are free to choose $x$ to be large enough for the gain to be strictly greater than the maximal possible loss. Thus, for such value of $x$, the second condition also holds.
\end{proof}

A good partition $p\in \{R, B\}^{V(H)}$ corresponds to some partition $q$ of $V(G)$: assign to a vertex $g_i\in V(G)$ the same color that is assigned to the long intervals starting from the vertex gadget $\intmodel{V}_i$. For such $p$ and $q$, we will write $p\sim q$. Clearly, the other direction is also true: for every $q\in \{R, B\}^{V(G)}$ there exists a good partition $p$ such that $p\sim q$. Recall that each gadget is covered by at most $\eta$ long intervals, there are $\eta$ long interval chains, and each long interval covers at most $\mu$ gadgets. The following Lemma~\ref{lemma:reduction_iff} implies that $\maxcut$ on cubic graphs reduces to $\maxcut$ on interval graphs of interval count 2. It is well-known that a $\maxcut$ problem is in NP, so, as the size of $H$ is polynomial in $|G|$, it implies \Cref{th:main_result}.

\begin{lemma}\label{lemma:reduction_iff}
Let $p\in \{R, B\}^{V(H)}$ be a $\maxcut$ partition of $H$, and $q\in \{R, B\}^{V(G)}$ be a partition of $G$ such that $p\sim q$. Then $q$ is a $\maxcut$ partition of $G$.
\end{lemma}
\begin{proof}
Suppose that $q$ is not a $\maxcut$ partition, that is, there is another $q'\in \{R, B\}^{V(G)}$ which is maximal. Let $p'\in \{R, B\}^{V(H)}$ be a good partition that satisfies $p'\sim q'$.

For simplicity, we denote by $E$ the set of intervals that belong to edge gadgets, and  $C$ denotes the set of intervals that belong to chains or to other gadgets. Clearly, $V(H) = C\sqcup E$.

The difference between the cut values of $p$ and $p'$, for edges induced by $E$, is at most $\eta^3\in O(n^3)$ because there are $\frac{\eta}{2}$ edge gadgets, and each of them changes the cut by at most $2\eta^2$. The difference between the cut values, for edges induced by $C$, is at most 
\begin{equation*}
\underbrace{\eta^2\cdot 10\cdot l\cdot \eta}_{\text{within each gadget}}+\underbrace{2\cdot\eta\cdot l\cdot\eta}_{\text{between long intervals}} + \underbrace{10\cdot\eta\cdot\mu\cdot l\cdot\eta}_{\text{between gadgets and long intervals}}
\end{equation*}
which belongs to $O(n^5)$. Finally, as $q'$ has a strictly greater cut value than $q$, we know that the number of cut edges between $C$ and $E$ for $p'$ is greater than the corresponding number for $p$ by at least $(k-2\eta) - \eta\cdot2\eta\cdot \frac{\eta}{2}$. By \Cref{lemma:edge_gadget_stable} and because an edge gadget is covered by at most $\eta$ long intervals, each of them adds at most $2\eta$ to the cut, and there are $\frac{\eta}{2}$ edge gadgets.
As we choose $k$ to be in $\Omega(n^6)$, the cut value of $p'$ is greater than the cut value of $p$, it is a contradiction.
\end{proof}

\section{Conclusion}

In this article, we have shown that $\maxcut$ remains NP-complete for interval graphs of interval count two.
The future work in this direction is to understand the complexity of $\maxcut$ on unit interval graphs.
Apart from that, our result raises the following question.
\begin{question}
Let $C\subset\mathbb{R}$ be a fixed set of size $\ell$.
What is the complexity of Maximum Cut on interval graphs that have a model, where the length of each interval is in $C$?
\end{question}
In the current paper, the length of long intervals depends on the number of vertices of the cubic graph $G$, therefore, it may be arbitrarily large.
This question asks, whether the problem remains NP-complete if there is an additional restriction on how much different are the lengths of short and long intervals.
If the length of short intervals is 1, and the length of long intervals is $c$, then, by making the value of $c$ smaller, the graph class becomes more similar to unit interval graphs that correspond to the case when $c=1$.

\section*{Acknowledgements}
The authors are thankful to Kaustav Bose for correcting \Cref{lemma:3block_stable}.

\section*{Declaration}
Alexey Barsukov is funded by the European Union (ERC, POCOCOP, 101071674). Views and opinions expressed are however those of the author(s) only and do not necessarily reflect those of the European Union or the European Research Council Executive Agency. Neither the European Union nor the granting authority can be held responsible for them.

Bodhayan Roy is funded by SERB MATRICS, Grant Number MTR/2021/000474.

\bibliography{maxcut2}


\end{document}